\newtheorem{thm}{Theorem}[section]
\newtheorem{cor}[thm]{Corollary}
\newtheorem{lem}[thm]{Lemma}
\newtheorem{prop}[thm]{Proposition}
\newtheorem{defn}[thm]{Definition}
\begin{document}
\title{Approximate Curve-Restricted Simplification of Polygonal Curves}
\author{Ali Gholami Rudi\thanks{
{Department of Electrical and Computer Engineering},
{Bobol Noshirvani University of Technology}, {Babol, Iran}.
Email: {\tt gholamirudi@nit.ac.ir}.}}
\date{}
\maketitle
\begin{abstract}
The goal in the min-\# curve simplification problem is to reduce the number of the
vertices of a polygonal curve without changing its shape significantly.
We study curve-restricted min-\# simplification of
polygonal curves, in which the vertices of the simplified
curve can be placed on any point of the input curve, provided
that they respect the order along that curve.
For local directed Hausdorff distance from the input
to the simplified curve in $\mathbb{R}^2$, we present an
approximation algorithm that computes a curve whose number
of links is at most twice the minimum possible.
\end{abstract}

\noindent
{\textbf{Keywords}: Curve simplification, geometric algorithms, computational geometry}.\\
\\
{\textbf{2010 Mathematics subject classification}: 68U05}.\\

\section{Introduction}
\label{sint}
The goal of the classical curve simplification problem is to reduce
the number of the vertices of a polygonal curve, without changing
its shape significantly.  There are several applications in which
curve simplification plays an important role.  In trajectory analysis, for instance,
there are two important reasons for this reduction.  First,
it reduces the storage and bandwidth requirements for storing
and transferring huge and growing collections of trajectory data.
Second, and probably more importantly, the complexity of most
trajectory analysis algorithms depends on the number of the
vertices of the input curves, and simplifying trajectories
can reduce the running time of these algorithms.

Let $P = \left< p_1, p_2, ..., p_n \right>$ be a polygonal curve
on the plane.
The curve $P' = \left< p'_1, p'_2, ..., p'_m \right>$ is a
simplification of $P$, if $p'_1 = p_1$, $p'_m = p_n$, $m \le n$,
and the distance between $P$ and $P'$ is at most $\epsilon$ (the
definition of the distance between these curves and the value
of $\epsilon$ is described below).
The simplified curve may be vertex-restricted, curve-restricted, or
unrestricted.  In vertex-restricted simplification, the vertices
of $P'$ coincide with the vertices of the input curve,
i.e.~for each $i$ where $1 \le i \le m$, $p'_i = p_j$ for
some index $j$, where $1 \le j \le n$.
In curve-restricted simplification, the vertices of $P'$ can be
placed on any point of the input curve, and
in unrestricted simplification there is no limitation on the
placement of the internal vertices of $P'$.
In the first two cases, which is the focus of the present paper,
the vertices of the simplified curve should
appear in order on the input curve, and thus split $P$ into sub-curves.
For each edge of the simplification $p'_ip'_{i+1}$,
in which $1 \le i \le m - 1$,
let $P_{p'_ip'_{i+1}}$ denote the sub-curve of $P$ from $p'_i$ to $p'_{i+1}$.

The distance between two curves is computed using measures
such as Fr{\'e}chet or Hausdorff \cite{agarwal05} (other
measures too are sometimes used such as \cite{buzer07}).
Let $D(C, C')$ denote the function that computes the distance
between two curves using any such measure.
The distance between the original and simplified curves is either
\emph{global} and computed for the curves as a whole,
or is \emph{local} and computed as the maximum distance of the
corresponding sub-curves, i.e.~$\max_{1 \le i \le m-1} D(p'_i p'_{i+1}, P_{p'_ip'_{i+1}})$.

Curve simplification is usually studied in two settings \cite{imai86}.
In the min-$\epsilon$ setting the maximum value of $m$ (the number of
the vertices of the simplified curve) is specified and
$\epsilon$ (the amount of distance between the original and
simplified curves) is minimised, and
in the min-\# setting $\epsilon$ is given while $m$ is minimised.
There are numerous results on vertex-restricted curve simplification in
the min-\# setting, only some of which provide a guarantee on the
number of the vertices of the simplification.
In the rest of this paper we focus on the min-\# problem,
and assume that $\epsilon$ is specified as an input.

The well-known algorithm presented by Douglas and Peucker \cite{douglas73}
does not minimise the number of the vertices of the simplified curve,
but is both simple and effective.
It assumes local directed Hausdorff distance from the input curve
to the simplified curve.
For simplifying $P$ with the maximum distance $\epsilon$,
it finds the most distant vertex $p_k$
from segment $p_1 p_n$; if their distance is at most $\epsilon$,
this segment is a link of the simplification.
Otherwise, the algorithm recursively simplifies
$\left<p_1, ..., p_k\right>$ and $\left<p_k, ..., p_n\right>$.
The worst-case time complexity of this algorithm is $O(n^2)$.
Hershberger and Snoeyink~\cite{hershberger94,hershberger97} improved the running
time of this algorithm to $O(n \log n)$ and later to $O(n \log^* n)$.

Among algorithms that compute an optimal simplification,
i.e.~a simplification with the minimum number of links,
the one presented by Imai and Iri~\cite{imai88}
is probably the most popular for local Hausdorff distance.
It creates a shortcut graph, the vertices of which represent
the vertices of the input curve.
An edge $p_ip_j$ shows that the distance between link $p_ip_j$
and sub-curve $\left<p_i, p_{i+1}, ..., p_j\right>$ is at most $\epsilon$.
A shortest path algorithm on this graph, finds the simplification
with the minimum number of vertices.
The time complexity of this algorithm is $O(n^2 \log n)$.
Chan and Chin \cite{chan96}, and also Melkman and O'Rourke \cite{melkman88}
improved the running time of this
algorithm to $O(n^2)$, and Chen and Daescu~\cite{chen03} reduced
its space complexity to $O(n)$.

There are many other results on vertex-restricted simplification
that consider the Fr{\'e}chet distance or compute the distance
of the curves globally.
For instance, van Kreveld at al.~\cite{vankreveld18}
studied the performance of the Douglas and Peucker~\cite{douglas73}
and Imai and Iri~\cite{imai88} algorithms, described above,
under the global Hausdorff or Fr{\'e}chet distance measures.
They showed that computing an optimal vertex-restricted simplification
using the global undirected Hausdorff distance or
global directed Hausdorff distance from the simplified to the
optimal curve is NP-hard, and
presented an output-sensitive dynamic programming algorithm with
the time complexity $O(mn^5)$ for computing an optimal
simplification under the global Fr{\'e}chet distance.
A faster dynamic programming algorithm for the same variation
of the problem was presented by van de Kerkhof et al.~\cite{kerkhof18}
with the time complexity $O(n^4)$.

Some results on vertex-restricted simplification
do not obtain an optimal simplification but
provide a guarantee on the number of the links of the
resulting simplifications using approximation algorithms.
Agarwal et al.~\cite{agarwal05}
for instance, presented a near-linear time approximation algorithm
for local Hausdorff distance using the uniform distance metric,
in which the distance between a point and a curve is defined
as their vertical distance.
They also presented an approximation algorithm for local
Fr{\'e}chet distance under $L_p$ metric.
Both of these algorithms are simple and greedy in nature.
Among these results, there are also vertex-restricted simplification
algorithms that assume streaming input or online
setting \cite{abam10,lin17,cao17,muckell14},
in which a limited storage is available or
the curve should be simplified in one pass.
It is beyond the scope of this paper to review the
literature on curve simplification extensively;
even many heuristic algorithms, such as \cite{chen12,long13},
have been presented for curve simplification (Zhang et al.~\cite{zhang18}
surveyed many of them for trajectory simplification).

Despite the number of results on vertex-restricted curve simplification,
curve-restricted simplification, which has attracted less attention,
can yield a curve with much fewer vertices,
as in Figure~\ref{frestriction}, in which a curve-restricted
simplification with only four vertices is demonstrated for a
curve whose vertex-restricted simplification is the same as the input curve.
For global directed Hausdorff distance, van de Kerkhof et al.~\cite{kerkhof18}
showed that curve-restricted simplification is NP-hard and provided an $O(n)$
algorithm for global Fr{\'e}chet distance in $\mathbb{R}^1$.

\begin{figure}
	\centering
	\includegraphics[width=\columnwidth]{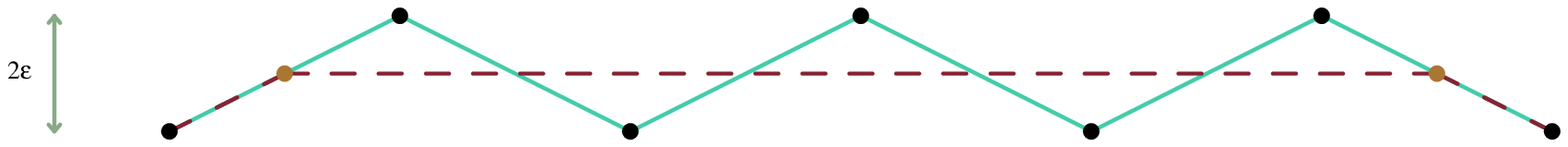}
	\caption{An example showing that curve-restricted simplifications
	can have far fewer vertices compared to vertex-restricted
	simplifications; the dashed links are a curve-restricted
	simplification of the curve with solid edges.}
	\label{frestriction}
\end{figure}

In this paper, we study the min-\# curve-restricted
simplification problem with maximum local Hausdorff
distance $\epsilon$ from the input curve to the simplified curve.
We present a dynamic programming algorithm that computes a simplified curve,
the number of the links of which is at most twice the minimum possible.
This paper is organized as follows: In Section~\ref{sprel}
we introduce the notation used in this paper.
In Section~\ref{slink}, we show how to compute a simplification
link between two edges of the input curve and in Section~\ref{smain},
we present our main algorithm.
We conclude this paper Section~\ref{sconclude}.

\section{Preliminaries and Notation}
\label{sprel}
A two-dimensional polygonal curve is represented as a sequence of vertices
on the plane, with line segments as edges between contiguous vertices.
The directed Hausdorff distance between curves $C$ and $C'$, denoted
as $H(C, C')$, is defined as the maximum of the distance between any
point of $C$ to the curve $C'$, i.e.~$H(C, C') = \max_{p \in C} \mathit{dist}(p, C')$,
in which $\mathit{dist}(p, C')$ is the
Euclidean distance between point $p$ and curve $C'$.

Let $P' = \left< p'_1, p'_2, ..., p'_m \right>$ be a
curve-restricted simplification of $P$.
We have $p'_1 = p_1$, $p'_m = p_n$, $m \le n$,
and the distance between $P$ and $P'$ is at most $\epsilon$.
Also, the vertices of $P'$ should appear in order along $P$.
Given a parameter $\epsilon$, the goal in the min-\# simplification is
to find a simplified curve with the minimum number of vertices, such that the
distance between the original and simplified curves is at most $\epsilon$.
In what follows, we use the term \emph{link} to refer to the edges of
the simplified curve, to distinguish them from the edges of the input curve.

For a link $\ell$ of $P'$, suppose $x$ and $y$ on $P$ are points
corresponding to the start and end points of $\ell$ and suppose $x$ is
on edge $p_ip_{i+1}$ and $y$ is on $p_jp_{j+1}$.
Then, $\ell$ \emph{covers} all edges $p_kp_{k+1}$ for $i \le k \le j$.
Let $P_\ell$ be the sub-curve of $P$ corresponding to link $\ell$,
i.e.~the sub-curve of $P$ from point $x$ to $y$.
The local Hausdorff distance from $P$ to $P'$ is the maximum
of $H(P_\ell, \ell)$ over all links $\ell$ of $P'$.
In this paper we assume local Hausdorff distance
to measure the distance between the input and simplified curves.

The $\epsilon$-neighbourhood of a vertex of $P$ or a segment
which is defined as follows.

\begin{figure}
	\centering
	\includegraphics[width=10cm]{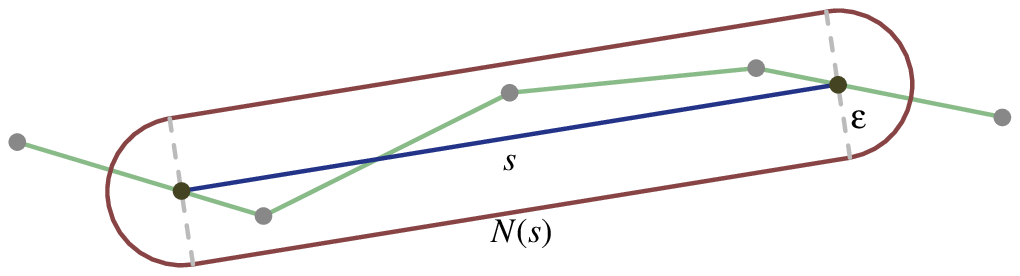}
	\caption{The $\epsilon$ neighbourhood of a segment}
	\label{fneigh}
\end{figure}

\begin{defn}
The $\epsilon$-neighbourhood of a point $p$, denoted as $N(p)$
is a disk of radius $\epsilon$ and with centre is at $p$.
Clearly, the set of points inside $N(p)$ are all points at
distance at most $\epsilon$ from $p$.
Similarly, the $\epsilon$-neighbourhood of a segment $s$,
denoted as $N(s)$, is the set of points at distance at most $\epsilon$
from any point of the segment $s$.
\end{defn}

The $\epsilon$-neighbourhood of a segment $s$ is demonstrated in
Figure~\ref{fneigh}.

\section{Identifying Simplification Links}
\label{slink}

\begin{lem}
\label{llink}
For the curve $P = \left< p_1, p_2, ..., p_n \right>$,
a segment $s$ from point $x$ on edge $p_i p_{i+1}$ to
point $y$ on edge $p_j p_{j+1}$ can be a link of a (not
necessarily optimal) curve-restricted simplification
if and only if it intersects $N(p_k)$ for every index
$k$, where $i < k \le j$.
\end{lem}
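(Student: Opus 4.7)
The plan is to unpack both directions as statements about the $\epsilon$-neighbourhood $N(s)$ of the candidate link $s$. The condition that $s$ is a valid simplification link is, by definition of local directed Hausdorff distance, that $H(P_\ell, s) \le \epsilon$, i.e.\ every point of the sub-curve $P_\ell$ from $x$ to $y$ lies inside $N(s)$. Also, by symmetry of the Euclidean distance, $s \cap N(p_k) \ne \emptyset$ is equivalent to $\mathrm{dist}(p_k, s) \le \epsilon$, i.e.\ $p_k \in N(s)$. So the lemma becomes the clean statement: $P_\ell \subseteq N(s)$ iff $p_k \in N(s)$ for every $k$ with $i < k \le j$.

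The forward direction ($\Rightarrow$) is immediate: the vertices $p_{i+1}, \ldots, p_j$ lie on $P_\ell$, so if the whole sub-curve is inside $N(s)$ then in particular each such $p_k$ is, and therefore $s$ intersects $N(p_k)$.

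For the backward direction ($\Leftarrow$) I would first decompose $P_\ell$ into its constituent pieces: the partial edge from $x$ to $p_{i+1}$, the full edges $p_{k}p_{k+1}$ for $i+1 \le k \le j-1$, and the partial edge from $p_j$ to $y$. The endpoints of each of these pieces are all in $N(s)$: the points $x$ and $y$ trivially, since they lie on $s$ itself; the intermediate vertices $p_k$ by hypothesis. The key observation is then that $N(s)$ is a convex region — it is the Minkowski sum of the segment $s$ with a disk of radius $\epsilon$ (a stadium shape), and equivalently the sub-level set $\{q : \mathrm{dist}(q,s) \le \epsilon\}$ of the convex function $\mathrm{dist}(\cdot, s)$. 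Hence any line segment whose two endpoints lie in $N(s)$ is contained in $N(s)$, and piecing the edges of $P_\ell$ together yields $P_\ell \subseteq N(s)$, as required.

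The only subtlety worth flagging is the convexity step: one must be careful to invoke convexity of $N(s)$ specifically (the $\epsilon$-neighbourhood of a single segment), not of $N$ applied to a polygonal curve, which need not be convex. Beyond that the argument is essentially a translation between the Hausdorff-distance formulation and the neighbourhood-intersection formulation, with convexity of the stadium $N(s)$ providing the bridge from the finite vertex condition to the continuous sub-curve condition.
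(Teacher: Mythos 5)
Your proposal is correct and follows essentially the same route as the paper's proof: the forward direction reads off $\mathrm{dist}(p_k, s) \le \epsilon$ from the Hausdorff condition, and the converse reduces the sub-curve to its constituent (partial) edges and uses the fact that the distance to $s$ on each edge is maximised at an endpoint, which is exactly your convexity-of-$N(s)$ argument made explicit. Your version is slightly more careful than the paper's in naming the convexity of the stadium $N(s)$ and in treating the partial edges at $x$ and $y$ separately, but it is the same proof.
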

\begin{proof}
Let $C$ be the sub-curve $P$ from $x$ to $y$.
If $s$ is a link of a simplification of $P$,
$H(C, s)$ is at most $\epsilon$.
This implies that the distance of every point of $C$ to $s$
is at most $\epsilon$.  For each vertex $p$ of $C$
this means that $s$ should include at least one point
from $N(p)$.

For the converse, suppose $s$ intersects $p_i p_{i+1}$
at $x$ and $p_j p_{j+1}$ at $y$, as well as $N(p)$ for
every vertex of $C$, the sub-curve of $P$ from $x$ to $y$.
It is enough to show that $H(C, s) \le \epsilon$.
For each edge, since the distance between its end points
and $s$ is at most $\epsilon$, the distance of other
points of the edge cannot be greater.  This holds for
every internal edge of $C$ and implies
$H(C, s) \le \epsilon$ as required.
\end{proof}

Lemma~\ref{llink} corresponds to a similar statement for
vertex-restricted simplifications.
We use this lemma later to compute the links of a simplification.

\begin{cor}
\label{clink}
For the curve $P = \left< p_1, p_2, ..., p_n \right>$,
a segment $s$ from point $x$ on edge $p_i p_{i+1}$ to
point $y$ on edge $p_j p_{j+1}$ is a link of a (not
necessarily optimal) curve-restricted simplification
if and only if $N(s)$ contains $p_k$ for every index
$k$, where $i < k \le j$.
\end{cor}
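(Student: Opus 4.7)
The plan is to derive the corollary directly from Lemma~\ref{llink} by observing that the two geometric conditions, "$s$ intersects $N(p_k)$" and "$p_k \in N(s)$", are really the same statement rephrased via the symmetry of the Euclidean distance function. So essentially no new geometric content is needed beyond the lemma; the proof is a one-line definitional unpacking followed by an invocation of Lemma~\ref{llink}.

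Concretely, first I would fix an index $k$ with $i < k \le j$ and argue the following chain of equivalences. The segment $s$ intersects $N(p_k)$ iff there is a point $q \in s$ with $\mathit{dist}(q, p_k) \le \epsilon$. By definition, this holds iff $\mathit{dist}(p_k, s) \le \epsilon$. By the definition of $N(s)$ as the set of points within distance $\epsilon$ of some point of $s$, this in turn holds iff $p_k \in N(s)$. Hence the condition "$s$ intersects $N(p_k)$ for every $k$ with $i < k \le j$" in Lemma~\ref{llink} is logically identical to "$N(s)$ contains $p_k$ for every such $k$".

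Second, I would simply apply Lemma~\ref{llink}: $s$ is a link of a (not necessarily optimal) curve-restricted simplification iff it intersects $N(p_k)$ for every $i < k \le j$, which by the previous paragraph is iff $N(s)$ contains each such $p_k$. This gives both directions of the corollary at once.

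There is no serious obstacle; the only thing to be slightly careful about is making clear that the equivalence between $s \cap N(p_k) \ne \emptyset$ and $p_k \in N(s)$ uses nothing but the symmetry $\mathit{dist}(q, p_k) = \mathit{dist}(p_k, q)$ and the definitions of $N(\cdot)$ for a point and a segment given in Definition~\ref{llink}'s preceding definition. Since both $N(p_k)$ and $N(s)$ are defined through the same Euclidean distance, the translation is purely notational.
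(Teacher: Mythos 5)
Your proposal is correct and matches the paper's own justification, which likewise derives the corollary from Lemma~\ref{llink} by noting that $s$ intersecting $N(p_k)$ is equivalent to $p_k$ lying in $N(s)$ via the symmetry of the Euclidean distance. The only difference is that you spell out both directions of this equivalence explicitly, whereas the paper states just one direction and leaves the converse implicit.
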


Corollary~\ref{clink} holds because if a segment $s$ intersects
the $\epsilon$-neighbourhood of a vertex $v_k$, the distance of
$v_k$ to $s$ is at most $\epsilon$ and it should be inside $N(s)$.
We use Corollary~\ref{clink} later to improve the time complexity
of detecting simplification links.

\begin{lem}
\label{llinkalgn}
Suppose $\ell$ is a link of a curve-restricted simplification
of curve $P = \left< p_1, p_2, ..., p_n \right>$, such that
$\ell$ starts from point $x$ on edge $p_i p_{i+1}$ and
ends at point $y$ on edge $p_j p_{j+1}$.
There exists another link $\ell'$ covering the same set of edges such
that the line that results from extending $\ell'$ has the following
property for at least two values of $k$ where $i < k \le j$:
either i) it is a tangent to $N(p_k)$, or
ii) it passes through one of the end points of $p_i p_{i+1}$ or
$p_jp_{j+1}$, or their intersection with $N(p_k)$.
\end{lem}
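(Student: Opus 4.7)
The plan is to argue by continuous deformation: start with the given link $\ell$, and move its endpoints $x$ on $p_ip_{i+1}$ and $y$ on $p_jp_{j+1}$ while preserving validity (by Lemma~\ref{llink}, this means the segment $xy$ must still intersect $N(p_k)$ for every $i<k\le j$). Because both endpoints remain on the same two edges throughout, the deformed link $\ell'$ automatically covers the same set of edges as $\ell$. The goal is to terminate the deformation at a configuration where at least two of the ``extremal'' conditions in (i) or (ii) have become tight.

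First I would fix $y$ and slide $x$ along $p_ip_{i+1}$ in some chosen direction, rotating the supporting line of $\ell$ around $y$. As $x$ varies, each intersection $xy \cap N(p_k)$ changes continuously, so the set of $x$ for which every $N(p_k)$ is hit is a closed subset of $p_ip_{i+1}$. I would slide $x$ as far as possible while remaining valid. The slide must stop for one of three reasons: (a) $x$ reaches $p_i$ or $p_{i+1}$, so the line through $\ell'$ passes through an endpoint of $p_ip_{i+1}$ (case ii); (b) $xy$ first becomes tangent from outside to some $N(p_k)$ (case i); or (c) the chord $N(p_k) \cap xy$ retreats so that $x$ lies on $\partial N(p_k) \cap p_ip_{i+1}$ (case ii). In each case, one of the required conditions is now active.

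Next I would perform a second one-parameter deformation that preserves this first active condition, and apply the same stopping argument to obtain a second active condition. If the first active condition is tangency to $N(p_{k_1})$, the lines tangent to $N(p_{k_1})$ form a one-parameter family which I would traverse by varying $y$ along $p_jp_{j+1}$ and recomputing $x$ on $p_ip_{i+1}$ so that tangency is preserved. If the first active condition instead pins $x$ to a specific point (a vertex of $p_ip_{i+1}$ or a point of $\partial N(p_{k_1}) \cap p_ip_{i+1}$), I would simply vary $y$ along $p_jp_{j+1}$, giving a one-parameter family of lines through the pinned $x$. Sliding $y$ as far as possible while remaining valid produces a second tight constraint of type (i) or (ii) by the same case analysis, and the resulting link $\ell'$ satisfies the statement.

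The main obstacle is ruling out degenerate scenarios: the deformation must truly reach a second tight constraint rather than traverse an unbounded feasible region. This is handled by observing that the parameter space $p_ip_{i+1} \times p_jp_{j+1}$ is a compact rectangle and the feasible set cut out by Lemma~\ref{llink} is closed, so any continuous motion within the feasible set must eventually reach its boundary, producing a new active constraint. The remaining work is case analysis to verify that each type of boundary contact corresponds to exactly one of the conditions (i) or (ii), which is straightforward bookkeeping rather than a conceptual difficulty.
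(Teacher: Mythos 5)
Your proposal is correct and follows essentially the same strategy as the paper's proof: a two-stage continuous deformation that first makes one constraint of type (i) or (ii) tight and then moves within that constraint until a second one becomes tight, justified by the closedness of the feasible set. The only notable difference is that you parametrize by the endpoint pair $(x,y)$ on the two edges rather than by the supporting line, which has the incidental benefit of automatically avoiding the complication the paper must handle via Figure~\ref{fshort}, namely a line that still meets $N(p_k)$ while the segment clipped between the two edges no longer does.
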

\begin{proof}
Let $L$ be the line resulting from extending the segment $\ell$.
If none of the mentioned properties hold for any value of $k$,
we move $L$ downwards until one of them holds for some value $k$,
i.e.~it becomes tangent to the $\epsilon$-neighbourhood
of $p_k$ or passes through the intersection of the $\epsilon$-neighbourhood
of $p_k$ and the last or the first edge covered by the $s$.
We then rotate $L$ around $p_k$ for case i, or the intersection of
case ii, until one of the conditions holds for another index.
Let $s$ be the segment on line $L$ with end points on $p_ip_{i+1}$
and $p_jp_{j+1}$; such a segment surely exist, since the movement
or rotation stops at the end points of these edges.
\begin{figure}
	\centering
	\includegraphics[width=\columnwidth]{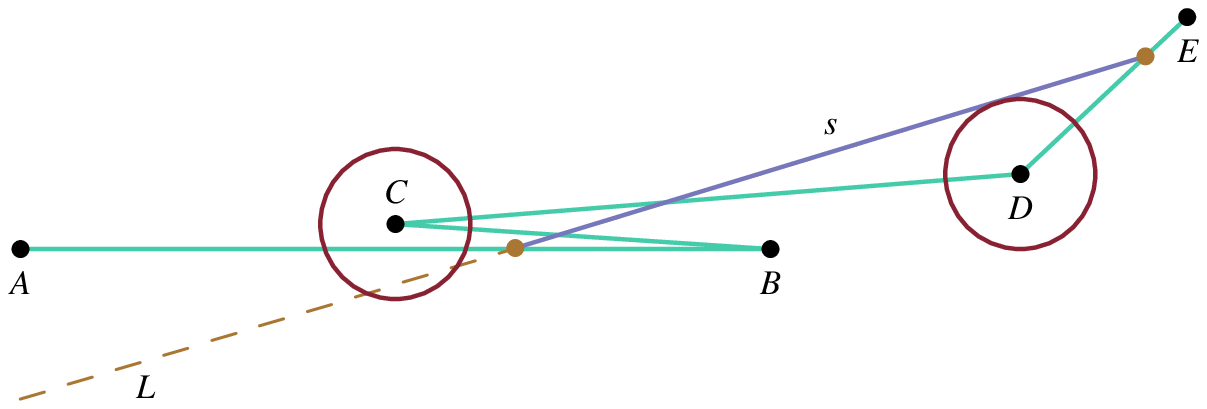}
	\caption{Rotating line $L$ around $N(D)$ counterclockwise;
	$s$ no longer intersects $N(C)$.}
	\label{fshort}
\end{figure}

Clearly $L$ cannot leave $N(p_k)$ for any possible index $k$ for
both the downward movement and the rotation; just before leaving
$N(p_k)$, $L$ becomes its tangent.
The only problem may be that although $N(p_k)$, for some $k$
where $i < k \le j$, is intersected by both $\ell$ and $L$,
$s$ may be too short to intersect $N(p_k)$; this is demonstrated
in Figure~\ref{fshort}.  However, since the rotation stops
at the intersection the first or the last edge and $N(p_k)$,
this case never happens.
\end{proof}

\begin{lem}
\label{llinkfind}
A link of a curve-restricted simplification
of $P = \left< p_1, p_2, ..., p_n \right>$,
from a point on edge $p_i p_{i+1}$ to a point on edge $p_j p_{j+1}$
can be found with the time complexity $O(m^3)$ where $m = j - i + 1$,
provided such a link exists.
\end{lem}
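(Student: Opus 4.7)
The plan is to use Lemma~\ref{llinkalgn} to reduce the problem to checking a polynomial-size set of candidate supporting lines, and then to test each candidate against the criterion of Lemma~\ref{llink} in linear time.

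First, I would enumerate all candidate lines $L$ that, by Lemma~\ref{llinkalgn}, are enough to consider. The lemma guarantees that if a link exists, then some link has a supporting line that satisfies, for at least two distinct indices $k$ with $i < k \le j$, one of the following: (i) $L$ is tangent to $N(p_k)$, or (ii) $L$ passes through an endpoint of $p_ip_{i+1}$ or $p_jp_{j+1}$, or through the intersection of one of those edges with $\partial N(p_k)$. The ``anchor'' points of type (ii) are: the four endpoints of the two bounding edges (a constant), together with at most $2m$ intersection points of $p_ip_{i+1}$ and $p_jp_{j+1}$ with the $m-1$ circles $\partial N(p_k)$. The tangency conditions of type (i) refer to $O(m)$ disks. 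Thus the candidates are: bitangent lines of two disks, $O(m^2)$ pairs yielding $O(1)$ lines each; tangent lines from an anchor point to a disk, $O(m)\cdot O(m)$ combinations yielding $O(1)$ lines each; and lines through two anchor points, $O(m^2)$ total. Altogether at most $O(m^2)$ candidate lines are generated.

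For each candidate line $L$, I would compute its intersection $x$ with $p_ip_{i+1}$ and $y$ with $p_jp_{j+1}$ in constant time; if either intersection fails to exist then the candidate is discarded. Otherwise, I would let $s$ be the segment $xy$ and, in $O(m)$ time, check for each $k$ with $i < k \le j$ whether $s$ intersects $N(p_k)$, i.e.\ whether $\operatorname{dist}(p_k, s) \le \epsilon$. By Lemma~\ref{llink}, $s$ is a valid link if and only if every such test succeeds, so the first candidate that passes all tests is returned. Over $O(m^2)$ candidates this yields total time $O(m^3)$.

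The main obstacle is ensuring that the enumeration faithfully covers the configurations guaranteed by Lemma~\ref{llinkalgn}: one has to list, without redundancy but also without omission, the degenerate cases where the ``two conditions'' of the lemma involve the same anchor point (for example, the line is simultaneously tangent to $N(p_k)$ and passes through a specific intersection with $p_ip_{i+1}$) and to verify that at most a constant number of lines arises from each pair of conditions. Once the enumeration is in hand, correctness follows directly from Lemma~\ref{llinkalgn} (existence of a candidate line whenever a link exists) and Lemma~\ref{llink} (the linear-time verification), and the complexity bound is immediate from the count of candidates times the cost of a single verification.
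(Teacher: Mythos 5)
Your proposal is correct and follows essentially the same route as the paper: both enumerate the $O(m^2)$ candidate supporting lines guaranteed by Lemma~\ref{llinkalgn} (the paper parametrizes tangency via three parallel lines at distance $\epsilon$, while you list bitangents, point--disk tangents, and lines through two anchor points, which amounts to the same count) and then verify each candidate in $O(m)$ time using the criterion of Lemma~\ref{llink}. The degeneracy concern you raise is real but is glossed over in the paper's proof as well.
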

\begin{proof}
We find a line for which the condition mentioned in Lemma~\ref{llinkalgn}
holds.  To do so, we find three parallel lines at
distance $\epsilon$ on the plane, $L_1$, $L_2$, and $L_3$, such
that a link can be found on line $L_2$.
We consider possible placements of these lines according to Lemma~\ref{llinkalgn} and
check for which of them the condition of Lemma~\ref{llink} holds for a segment on $L_2$.
If $L_2$ is a tangent to $N(p_k)$ for some value of $k$ where
$i < k \le j$, then either $L_1$ or $L_3$ should pass through
$p_k$.  We therefore try different placements of these three lines
such that the following property holds for two values of $k$ for $i < k \le j$:
either i) $L_1$ or $L_3$ passes through $p_k$,
or ii) $L_2$ passes through the intersection $N(p_k)$ and
one of $p_{i - 1}p_i$ or $p_jp_{j+1}$ or the end points of these edges.
Since there are $O(m)$ choices for the first and the second
conditions, the number of total cases to consider is $O(m^2)$.

For each of $O(n^2)$ possible placements of these lines,
we have to verify if there exists
a segment $s$ on $L_2$ such that $H(C, s)$ is at most $\epsilon$.
Let $x$ be the intersection of $L_2$ and $p_i p_{i+1}$ and
let $y$ be the intersection of $L_2$ and $p_j p_{j+1}$; if
$x$ or $y$ do not exist, $L_2$ cannot contain a link.
Based on Lemma~\ref{llink}, if the segment $xy$ intersects
$N(p_k)$ for every $i < k \le j$, it is a valid link.
This can be checked with the time complexity $O(m)$.
\end{proof}

\begin{cor}
\label{rlinkfind}
To force the link to start from $p_i$, instead of
any point on edge $p_ip_{i+1}$ in Lemma~\ref{llinkfind},
we can fix this point on $L_2$ and try the condition mentioned
in the proof of Lemma~\ref{llinkfind} for only one value of $k$.
\end{cor}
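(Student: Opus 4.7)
The plan is to adapt the enumeration in the proof of Lemma~\ref{llinkfind} by observing that fixing the link's starting point at $p_i$ removes one of the two degrees of freedom of the supporting line $L_2$. In Lemma~\ref{llinkfind}, $L_2$ is an arbitrary line in the plane (two parameters), and Lemma~\ref{llinkalgn} supplies two critical constraints, indexed by two values of $k$, which is exactly why $O(m^2)$ configurations were needed. Requiring $x = p_i$ forces $L_2$ to pass through $p_i$, consuming one degree of freedom, so only one additional constraint is needed to pin down $L_2$.

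The substantive step is to establish a one-constraint analogue of Lemma~\ref{llinkalgn} for links rooted at $p_i$. Given any valid link $\ell$ starting at $p_i$, let $L$ be the line through $\ell$ and rotate $L$ about the fixed point $p_i$ until the first of events i) or ii) from Lemma~\ref{llinkalgn} occurs at some index $k$ with $i < k \le j$. By the same argument as in the proof of Lemma~\ref{llinkalgn}, $L$ cannot leave any neighbourhood $N(p_k)$ it originally meets without first becoming tangent to it, so the intersection property of Lemma~\ref{llink} is preserved throughout the rotation; the stopping criterion (ii) involving the edge $p_jp_{j+1}$ rules out the short-segment degeneracy of Figure~\ref{fshort}, so the rotated line still supports a valid link from $p_i$ to a point on $p_jp_{j+1}$.

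It then remains to count: there are $O(m)$ choices of the critical index $k$ together with a constant number of subcases per $k$ (which side of $L_2$ carries the tangency for condition (i), and which of the endpoints of $p_jp_{j+1}$ is used for condition (ii)), giving $O(m)$ placements of the triple $(L_1, L_2, L_3)$ constrained to pass through $p_i$. Each placement can then be validated in $O(m)$ time by checking the condition of Lemma~\ref{llink}. The main obstacle, mirroring the subtlety in Lemma~\ref{llinkalgn}, is confirming that the single-parameter rotation around $p_i$ terminates at a configuration whose segment on $L_2$ still reaches $p_jp_{j+1}$; condition (ii) handles exactly this case by halting the rotation at the first relevant intersection with that edge.
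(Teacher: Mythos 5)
Your argument is correct and matches what the paper intends: the corollary is stated without proof, and your degree-of-freedom reduction (pinning $L_2$ at $p_i$ leaves a one-parameter family, so a single rotation about $p_i$ reaches a critical configuration satisfying one condition of Lemma~\ref{llinkfind}, with the short-segment degeneracy excluded by the stopping events on $p_jp_{j+1}$ exactly as in Lemma~\ref{llinkalgn}) is precisely the justification the paper relies on. The resulting $O(m)$ candidate placements, each checkable via Lemma~\ref{llink}, is the intended count.
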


Algorithms based on the construction of the shortcut graph of
Imai and Iri~\cite{imai88} perform steps similar to Lemma~\ref{llinkfind}:
for each $i$ and $j$, where $1 \le i < j \le n$, it should
be verified if the segment $p_i p_j$ intersects the
$\epsilon$-neighbourhood of every vertex $p_k$ for $i < k < j$.
This task can be optimised by computing
the set of lines that pass through $p_i$ and intersect the
$\epsilon$-neighbourhood of the vertices that appear after it
(the intersection of double cones; see \cite{chen03}, for instance).
Unfortunately, for curve-restricted simplification that does
not seem possible, since the end points of each link may not be
a vertex and are chosen from a much larger set (see Lemma~\ref{llinkalgn}).
Therefore, to improve the time complexity of Lemma~\ref{llinkfind},
we should use an alternative strategy.

\begin{figure}
	\centering
	\includegraphics[width=8cm]{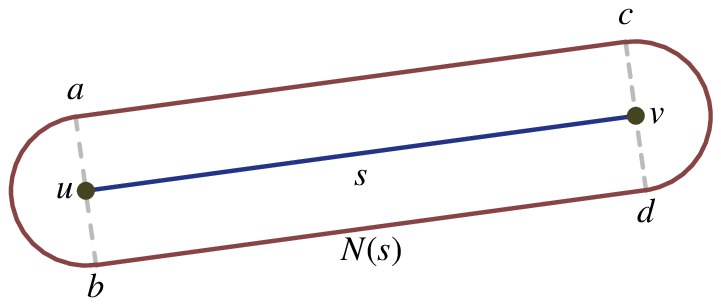}
	\caption{Symbols used for $N(s)$ in Lemma~\ref{llinkfast}}
	\label{flinkfast}
\end{figure}

\begin{lem}
\label{llinkfast}
Let $S$ be a set of $n$ points on the plane and let $\delta$ be a constant,
where $0 < \delta < 1$.
There exists a data structure with $O(n^{1+\delta})$ preprocessing time and space,
which, for any segment $s$, can verify if all points in $S$ are
inside the $\epsilon$-neighbourhood of $s$ in $O(2^{1/\delta}\log n)$ time.
\end{lem}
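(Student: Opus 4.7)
The plan is to build a balanced binary partition of $S$ of depth $\lceil 1/\delta \rceil$ and exploit the convexity of $N(s)$ to bound the recursion. First, I reformulate the membership test: letting $L$ be the line through $s = \overline{ab}$, a point $p$ lies in $N(s)$ iff (a) the perpendicular projection of $p$ to $L$ lies on $s$ and the perpendicular distance from $p$ to $L$ is at most $\epsilon$, or (b) the projection lies past $a$ and $|p - a| \le \epsilon$, or (c) the projection lies past $b$ and $|p - b| \le \epsilon$. Hence the query reduces to three range-emptiness tests whose forbidden regions are each bounded by $O(1)$ lines and an arc of one circle.

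Next, the data structure. I would recursively split the point set by a line into two halves of roughly equal size, continuing to depth $k = \lceil 1/\delta \rceil$. At each node I attach a secondary structure --- the convex hull of the subtree's points together with a Dobkin--Kirkpatrick hierarchy --- so that the question ``does this subtree's point set fit inside a given convex region of constant boundary complexity?'' can be answered in $O(\log n)$ time. For a query, I descend the tree from the root: if $N(s)$ lies entirely on one side of the partition line, recurse into that child only; otherwise recurse into both. Because $N(s)$ is convex, any line cuts it into at most two pieces, so the branching is at most two per level, giving at most $2^k = 2^{1/\delta}$ canonical subproblems. Each is either trivially answered (the subtree's cell lies entirely inside or outside $N(s)$) or invokes the $O(\log n)$ secondary check, for a total query time of $O(2^{1/\delta} \log n)$. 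The multi-level partition analysis then bounds the total preprocessing and space by $O(n^{1+\delta})$.

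The main obstacle will be making the convex-polygon-in-stadium test actually run in $O(\log n)$ using the hull structure. I would decompose it according to the reformulation above: two tangent-line queries to verify that the hull lies within the $2\epsilon$-strip around $L$, and two furthest-point queries restricted to the half-planes past $a$ and past $b$ to check containment inside the endpoint disks. Each is a standard logarithmic-time operation on a convex polygon augmented with a Dobkin--Kirkpatrick hierarchy (and, for the furthest-point part, a furthest-point Voronoi diagram of the hull vertices). The remaining delicate piece is choosing each partition line so that both the two-child branching bound and the subtree-size balance are preserved simultaneously; this can be obtained by a ham-sandwich-style selection, analogous to the balanced simplicial partitions underlying Matoušek-type partition trees.
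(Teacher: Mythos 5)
Your geometric decomposition of the containment test is essentially the paper's: containment of $S$ in $N(s)$ splits into two halfplane-emptiness tests for the $2\epsilon$-strip around the supporting line (handled by binary search on the convex hull of $S$, since the extreme point in a direction is a hull vertex) and two tests of the form ``is every point of $S$ lying in the halfplane past an endpoint of $s$ within distance $\epsilon$ of that endpoint?'' The gap is in how you answer the latter. You dismiss it as ``a standard logarithmic-time operation'' via a Dobkin--Kirkpatrick hierarchy and a furthest-point Voronoi diagram, but a furthest-point Voronoi diagram only answers \emph{unrestricted} farthest-point queries; restricted to a query halfplane, the relevant vertices form a chain of the hull along which the distance to the query point is not unimodal, so neither binary search nor Voronoi point location applies. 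This halfplane-restricted farthest-point query is exactly the \emph{halfplane proximity query} problem, and it is the hard core of the lemma: the paper obtains the stated bounds by invoking the data structure of Aronov et al., whose $O(n^{1+\delta})$ preprocessing and $O(2^{1/\delta}\log n)$ query time are precisely where the lemma's bounds come from. You are in effect trying to re-derive that result, and your construction does not succeed.

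Concretely, your partition tree has depth $\lceil 1/\delta\rceil = O(1)$, so each leaf still holds $\Theta(n)$ points. When the boundary line of the halfplane (or the boundary of $N(s)$) crosses a leaf cell --- which happens generically --- you are left with an unresolved subproblem on $\Theta(n)$ points for which the only tool available is again the halfplane-restricted farthest-point query you have not shown how to answer. Deepening the tree to $\Theta(\log n)$ levels so that leaves become trivial destroys your $2^{1/\delta}$ bound on the number of visited nodes (two-way branching over $\log n$ levels visits up to $n$ nodes), and controlling that requires a genuine crossing-number analysis of the kind underlying Aronov et al.'s structure. So the proposal either needs to cite their result directly --- at which point the partition tree is unnecessary, as in the paper's proof --- or supply a correct construction for halfplane proximity queries, which it currently does not.
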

\begin{proof}
We first compute the convex hull $H$ of the points in $S$.
The most distant point of $S$ from $s$ is a vertex of $H$.
Let $\ell(x, y)$ be the line that results from extending
the segment from point $x$ to point $y$, and
let $h(x, y)$ be the halfplane on the left side of $\ell(x, y)$.
All members of $S$ are in $N(s)$, if and only if there is no point
in the following four regions (we use the symbols defined in Figure~\ref{flinkfast}):
\begin{enumerate}
\item $h(a, c)$,
\item $h(d, b)$,
\item $h(b, a) \setminus N(u)$, and
\item $h(c, d) \setminus N(u)$.
\end{enumerate}
Since, the intersections of a convex polygon and a line can be
computed in logarithmic time, the first two regions can be checked
in $O(\log n)$.
The other two regions can be checked using \emph{halfplane proximity queries}:
given a directed line $\ell$ and a point $q$, report the point farthest from $q$
among those to the left of $\ell$.
Aronov et al.~\cite{aronov18} presented a data structure that uses $O(n^{1 + \delta})$
preprocessing time and space, to answer such queries in $O(2^{1 / \delta} \log n)$ time,
for any $\delta$ ($0 < \delta < 1$).
Therefore, to check the third region, we perform a halfplane proximity query
for line $\ell(b, a)$ and point $u$; only if the distance of the farthest point
to $u$ in $h(b, a)$ is at most $\epsilon$, the third region is empty.
Similarly, to check the fourth region, we perform a halfplane proximity query,
specifying line $\ell(c, d)$ and point $v$ as inputs.
\end{proof}

\begin{lem}
\label{llinkfindfast}
Let $\delta$ be a constant, where $0 < \delta < 1$.
With $O(n^{3+\delta})$ preprocessing time and space,
a link of a curve-restricted simplification
of a polygonal curve $P = \left< p_1, p_2, ..., p_n \right>$,
from any edge $p_i p_{i+1}$ to any other edge $p_j p_{j+1}$
can be found with the time complexity $O(n^2 \log n)$,
provided that such a link exists.
\end{lem}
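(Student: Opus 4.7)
The plan is to combine the structural characterisation from Lemma~\ref{llinkalgn} with the fast $N(s)$-coverage test of Lemma~\ref{llinkfast}, precomputing the latter's data structure for every relevant vertex subset of $P$. Specifically, for the preprocessing I would build, for each ordered pair $(i,j)$ with $1\le i<j\le n$, a copy of the data structure of Lemma~\ref{llinkfast} on the point set $S_{i,j}=\{p_{i+1},p_{i+2},\ldots,p_j\}$. Each such structure costs $O(n^{1+\delta})$ time and space, and there are $O(n^2)$ pairs, so the total preprocessing is $O(n^{3+\delta})$ time and space, matching the claimed bound.

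For the query with indices $i$ and $j$, I would enumerate the same $O(n^2)$ candidate lines $L_2$ as in the proof of Lemma~\ref{llinkfind}: for each of the $O(n)$ choices of a vertex $p_a$ with $i<a\le j$ that $L_1$ or $L_3$ is forced to pass through (case i of Lemma~\ref{llinkalgn}), and for each of the $O(n)$ choices of a discrete passage event for a second vertex $p_b$ (tangency of $L_2$ to $N(p_b)$, or passage of $L_2$ through an endpoint of $p_ip_{i+1}$ or $p_jp_{j+1}$, or through the intersection of such an edge with $N(p_b)$, as in case ii), I would construct the at most constant number of lines $L_2$ consistent with these two constraints. For each candidate $L_2$, I would compute in $O(1)$ the intersections $x=L_2\cap p_ip_{i+1}$ and $y=L_2\cap p_jp_{j+1}$, discard the candidate if either fails to exist, and then call the preprocessed structure for $S_{i,j}$ to test in $O(2^{1/\delta}\log n)=O(\log n)$ time (since $\delta$ is constant) whether every point of $S_{i,j}$ lies in $N(xy)$. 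By Corollary~\ref{clink} the first candidate that passes this test is a valid link, and by Lemma~\ref{llinkalgn} if no candidate passes then no link from $p_ip_{i+1}$ to $p_jp_{j+1}$ exists. The total query cost is $O(n^2)\cdot O(\log n)=O(n^2\log n)$, as required.

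The main obstacle is a small but essential bookkeeping issue: Lemma~\ref{llinkfast} tests $N(s)$-coverage for a \emph{segment} $s$, while Lemma~\ref{llinkalgn} characterises a \emph{line} $L_2$, so one must clip $L_2$ to the bounding edges $p_ip_{i+1}$ and $p_jp_{j+1}$ to obtain the actual segment $xy$ before invoking the test, and reject placements in which the clip is empty. A second point to verify is that the $O(n^2)$ independent preprocessings truly sum to $O(n^{3+\delta})$ with no shared overhead, which is immediate because the structure of Aronov et al.\ on each $S_{i,j}$ is built from scratch. Once these details are handled, the bounds fall out directly from Lemma~\ref{llinkfast} and the enumeration scheme of Lemma~\ref{llinkfind}.
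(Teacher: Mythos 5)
Your proposal is correct and follows essentially the same route as the paper: precompute one copy of the Lemma~\ref{llinkfast} structure for each contiguous vertex range (giving $O(n^{3+\delta})$ preprocessing) and then replace the $O(m)$ per-candidate check in the enumeration of Lemma~\ref{llinkfind} by the $O(\log n)$ query. Your explicit appeal to Corollary~\ref{clink} to equate segment--neighbourhood intersection with containment in $N(s)$, and your note about clipping the line to the bounding edges, are details the paper leaves implicit, but they do not change the argument.
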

\begin{proof}
For every pair of indices $x$ and $y$, where $1 < x \le y < n$,
we initialize the data structure mentioned in Lemma~\ref{llinkfast}
$D_{xy}$ for points $\{ p_x, p_{x + 1}, ..., p_y \}$.
This can be done with the time complexity $O(n^{3 + \delta})$.
In Lemma~\ref{llinkfind}, to check if a segment from $p_i p_{i+1}$
to $p_j p_{j+1}$ is a link, we test to see if it intersects $N(p_k)$
for every index $k$, where $i < k \le j$.  We improve the time
complexity of this task to $O(\log n)$ by using $D_{xy}$.
\end{proof}

\section{Simplification Algorithm}
\label{smain}

\begin{figure}
	\centering
	\includegraphics[width=\columnwidth]{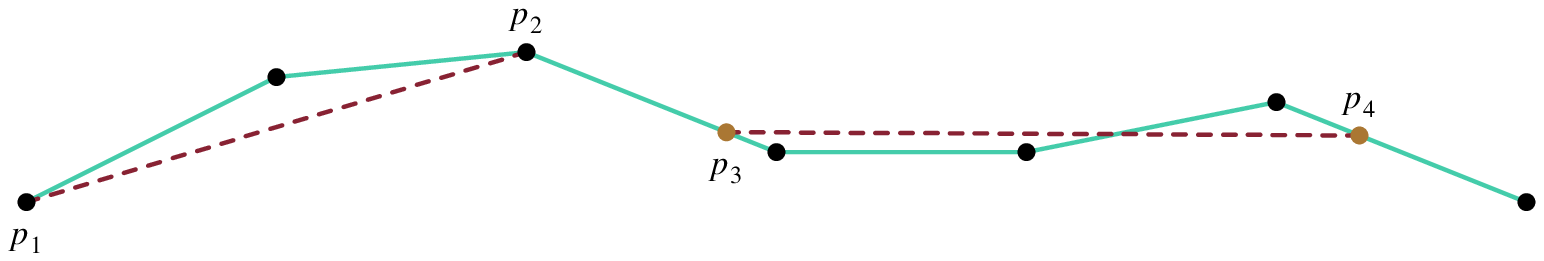}
	\caption{A DLC $\left< p'_1 p'_2, p'_3 p'_4 \right>$
	of a curve with six links (Definition~\ref{ddlc})}
	\label{fdlc}
\end{figure}

\begin{defn}
\label{ddlc}
A sequence of segments $D = \left< p'_1 p'_2, p'_3 p'_4, ..., p'_{2k - 1} p'_{2k} \right>$
is a \emph{disjoint link chain} (DLC) for curve
$P = \left< p_1, p_2, ..., p_n \right>$, if
i) $p'_1$ is on $p_1 p_2$ and $p'_{2k}$ is on $p_{n-1} p_n$,
ii) for each index $i$,
where $1 \le i \le k$, $p'_{2i - 1} p'_{2i}$ is a valid simplification link, and
iii) for each index $i$,
where $1 \le i < k$, $p'_{2i}$ and $p'_{2i+1}$ are on the same edge of $P$,
and
iv) the vertices of $D$ appear in order on $P$ (i.e.~first $p'_1$
appears on $P$, then $p'_2$, then $p'_3$, and so forth).
\end{defn}

Figure~\ref{fdlc} demonstrates a DLC of a curve with six links.

\begin{prop}
\label{rdlc}
Given a DLC $D = \left< p'_1 p'_2, p'_3 p'_4, ..., p'_{2k - 1} p'_{2k} \right>$
for curve $P = \left< p_1, p_2, ..., p_n \right>$,
such that $p'_1 = p_1$,
a curve-restricted simplification of $P$ with $2k$ links can be obtained
from $D$ by connecting the end of each link of $D$ to the start of its next
link and connecting the end of the last one to $p_n$.
\end{prop}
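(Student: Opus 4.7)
The plan is to verify directly that the constructed curve meets all the requirements of a curve-restricted simplification of $P$ with exactly $2k$ links. First I would enumerate the vertices of the constructed curve in order as $p'_1, p'_2, p'_3, \ldots, p'_{2k}, p_n$, which gives $2k+1$ vertices and hence $2k$ links. The boundary conditions $p'_1 = p_1$ (by hypothesis) and that the final vertex is $p_n$ (by construction) are then immediate.

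Next I would classify the $2k$ links of the constructed curve into two types and verify the Hausdorff bound $H(P_\ell, \ell) \le \epsilon$ for each. The $k$ links of the form $p'_{2i-1} p'_{2i}$, for $1 \le i \le k$, are the original DLC links, which are valid simplification links by Definition~\ref{ddlc}(ii). The remaining $k$ links are the connecting ones: the internal links $p'_{2i} p'_{2i+1}$ for $1 \le i < k$ have both endpoints on a common edge of $P$ by Definition~\ref{ddlc}(iii), and the final link $p'_{2k} p_n$ has both endpoints on edge $p_{n-1} p_n$ by Definition~\ref{ddlc}(i). In each of these cases the link $\ell$ coincides with a sub-segment of a single edge of $P$, so the corresponding sub-curve $P_\ell$ equals $\ell$ itself, giving $H(P_\ell, \ell) = 0 \le \epsilon$ trivially.

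Finally I would check that the vertices of the constructed curve appear in order along $P$: for the $p'_i$ this is Definition~\ref{ddlc}(iv), and $p_n$ comes last because it is the endpoint of $P$ and $p'_{2k}$ lies on $p_{n-1} p_n$. Combining this ordering with the distance bounds above and the count of $2k$ links completes the verification. There is no real obstacle in this argument: the proposition is essentially a direct unpacking of the DLC definition, and the only care required is in bookkeeping the indices and in observing that the connecting segments, because they lie on edges of $P$, contribute zero Hausdorff distance and therefore impose no further constraint.
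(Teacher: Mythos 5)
Your verification is correct: the paper states Proposition~\ref{rdlc} without any proof, treating it as an immediate consequence of Definition~\ref{ddlc}, and your argument supplies exactly the routine details that are implicitly being relied upon (the $2k$-link count, the validity of the original DLC links, the observation that each connecting segment lies on a single edge of $P$ and hence contributes zero local Hausdorff distance, and the ordering of the vertices along $P$). There is no gap and no divergence from the paper's intent.
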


\begin{figure}
	\centering
	\includegraphics[width=\columnwidth]{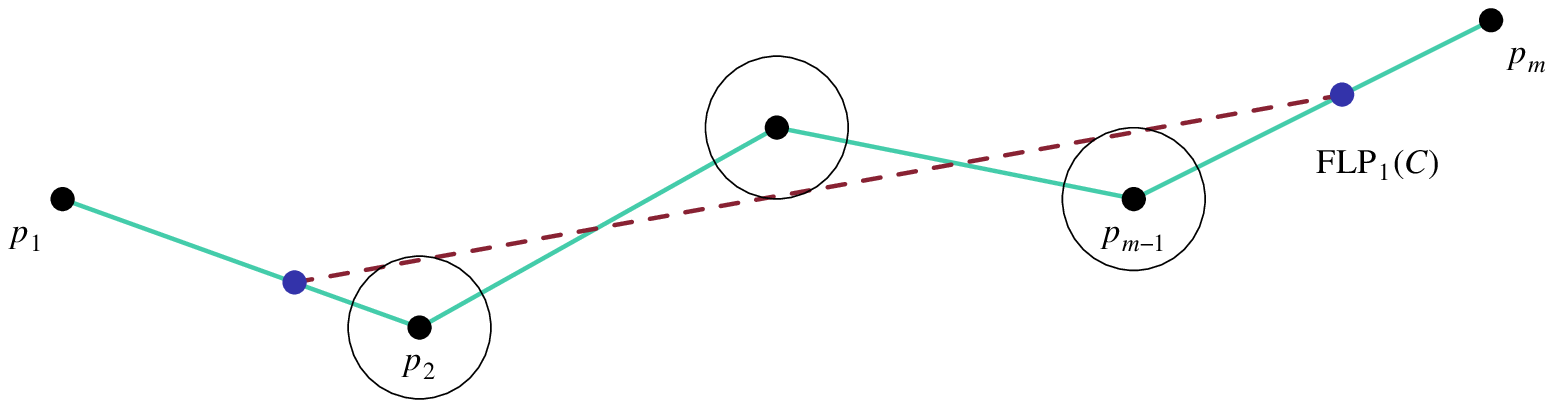}
	\caption{$\mathrm{FLP}_1(\left< p_1, p_2, ..., p_m \right>)$
	of a curve with six links (Definition~\ref{dflp})}
	\label{fflp}
\end{figure}

\begin{defn}
\label{dflp}
For a polygonal curve $C = \left< p_1, p_2, ..., p_m \right>$,
the \emph{first link point} with $k$ links,
denoted as $\mathrm{FLP}_k(C)$,
is the first point $x$ on $p_{m-1} p_m$,
such that there exists a disjoint link
chain $D= \left< p'_1 p'_2, p'_3 p'_4, ..., p'_{2k - 1} p'_{2k} \right>$ of $C$,
in which $p'_{2k} = x$.
\end{defn}

Figure~\ref{fflp} demonstrates $\mathrm{FLP}_1$ of a curve with four edges.
Since the line containing a link can be moved or rotated
to obtain a new link, unless the conditions mentioned in
Lemma~\ref{llinkalgn} holds for it,
Lemma~\ref{llinkfindfast} yields the following corollary.

\begin{cor}
\label{ccorner}
For a sub-curve $Q = \left< q_1, q_2, ..., q_m \right>$
of a polygonal curve $P = \left< p_1, p_2, ..., p_n \right>$,
$\mathrm{FLP}_1(Q)$ and its corresponding link can be computed
with the time complexity $O(n^2 \log n)$,
after some preprocessing with the time complexity $O(n^{3+\delta})$,
for some constant $\delta$ ($0 < \delta < 1$).
\end{cor}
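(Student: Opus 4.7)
My plan is to mirror the proof of Lemma~\ref{llinkfindfast}, but with the objective being minimisation of the endpoint position along $q_{m-1}q_m$ rather than mere existence of a link. First I would reuse the preprocessing from Lemma~\ref{llinkfindfast}: for every pair of indices $(x,y)$ with $1 \le x \le y \le n$, build the data structure $D_{xy}$ of Lemma~\ref{llinkfast} on $\{p_x,\dots,p_y\}$, for a total of $O(n^{3+\delta})$ time and space. Since $Q$ is a sub-curve of $P$, the relevant vertices $q_2,\dots,q_{m-1}$ sit at contiguous positions in $P$, so one of these $D_{xy}$'s already covers them.

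Next I would establish a normal-form statement in the spirit of Lemma~\ref{llinkalgn}, but tuned to the first link point: if $\ell$ is a valid link from $q_1q_2$ to $q_{m-1}q_m$ whose endpoint $y$ on $q_{m-1}q_m$ is as close to $q_{m-1}$ as possible, then either $y=q_{m-1}$, or the supporting line $L$ realises two of the criticalities enumerated in Lemma~\ref{llinkalgn}, namely tangency to some $N(q_k)$ with $1<k<m$, or passage through an endpoint of $q_1q_2$ or $q_{m-1}q_m$, or through the intersection of such an edge with some $N(q_k)$. The argument is the same two-parameter perturbation used in Lemma~\ref{llinkalgn}: treat $L$ as having two degrees of freedom, translate and then rotate it so as to monotonically decrease $y$; the motion can be blocked only by the listed criticalities, and each blocking event consumes one degree of freedom, so after at most two events the candidate $L$ is fully determined.

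I would then enumerate these $O(m^2)=O(n^2)$ combinatorial classes of candidate lines exactly as in the proof of Lemma~\ref{llinkfind}, plus the $O(n)$ degenerate configurations where $y=q_{m-1}$ (handled via Corollary~\ref{rlinkfind}). For each candidate $L_2$, I intersect it with $q_1q_2$ and $q_{m-1}q_m$ to obtain the endpoints of the tentative link, discard it if no such intersections exist, and otherwise query the appropriate $D_{xy}$ in $O(\log n)$ time using Lemma~\ref{llinkfast} to verify that every $N(q_k)$ with $1<k<m$ contains a point of the segment. Among all candidates passing the test, I return the one minimising the position of $y$ along $q_{m-1}q_m$, together with its supporting segment; the total post-preprocessing cost is $O(n^2\log n)$.

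The main obstacle is the middle step, the normal-form claim. Lemma~\ref{llinkalgn} only produces \emph{some} canonical link covering the same edges, with no guarantee on where its endpoint falls on the last edge, whereas here I need that the extremal $y$ itself is attained by a canonical link. Making the deformation genuinely monotone in $y$, while ensuring that the segment on $L$ does not shrink below one of the required $N(q_k)$'s (the Figure~\ref{fshort} obstruction) and that the boundary cases where the endpoint on $q_1q_2$ reaches $q_1$ or $q_2$ are absorbed into the enumerated criticalities, is where the careful bookkeeping lies; the rest of the argument is then a direct translation of Lemma~\ref{llinkfindfast}.
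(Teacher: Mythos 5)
Your proposal is correct and follows essentially the same route as the paper, which in fact offers no explicit proof of this corollary beyond the one-sentence remark that a non-canonical link's supporting line can be moved or rotated, so the enumeration of Lemma~\ref{llinkfind} combined with the data structures of Lemma~\ref{llinkfindfast} suffices. The extremality issue you flag --- that Lemma~\ref{llinkalgn} only guarantees \emph{some} canonical link covering the same edges, not that the minimal endpoint on the last edge is itself attained canonically --- is a genuine gap in the paper's presentation, and your proposed monotone-perturbation normal form is exactly the missing ingredient.
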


In Theorem~\ref{tmain} we present an algorithm for computing a
minimum-sized DLC.

\begin{thm}
\label{tmain}
A DLC of minimum size for curve
$P = \left< p_1, p_2, ..., p_n \right>$
can be computed in $O(n^5 \log n)$.
\end{thm}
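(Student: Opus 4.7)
The plan is a dynamic program indexed by an edge of $P$ and a link count. Define $f(j,k)$ to be the earliest point along $P$ on edge $p_jp_{j+1}$ at which some DLC of exactly $k$ links can end, and $f(j,k) = \infty$ otherwise. The minimum DLC size is then the smallest $k$ for which $f(n-1,k)$ is finite, and the DLC itself is recovered via standard back-pointers.

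The base case is $f(j,1) = \mathrm{FLP}_1(\langle p_1, p_2, \ldots, p_{j+1}\rangle)$, computable in $O(n^2 \log n)$ after $O(n^{3+\delta})$ preprocessing by Corollary~\ref{ccorner}. For the transition, a $k$-link DLC ending on $p_jp_{j+1}$ decomposes into a $(k-1)$-link prefix ending at some point $f(i,k-1)$ on some edge $p_ip_{i+1}$ with $i \le j$, followed by one more link whose start lies on $p_ip_{i+1}$ at or after $f(i,k-1)$ and whose end lies on $p_jp_{j+1}$. The earliest such ending is precisely $\mathrm{FLP}_1$ applied to the sub-curve whose first edge is the truncation of $p_ip_{i+1}$ beginning at $f(i,k-1)$, so
\[
f(j,k) \;=\; \min_{\substack{i \le j \\ f(i,k-1)\text{ finite}}} \mathrm{FLP}_1\bigl(\langle f(i,k-1), p_{i+1}, \ldots, p_{j+1}\rangle\bigr).
\]

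Correctness follows by induction on $k$ together with an exchange argument: if an optimal DLC has its $k$-th link ending at $y$ on $p_jp_{j+1}$, then, because a subsequent link need only start on $p_jp_{j+1}$ at or after the previous link's end, we may replace the $k$-link prefix by a witness that ends at $f(j,k) \le y$ without increasing the link count or breaking any continuation. Hence tracking only the earliest endpoint at each $(j,k)$ preserves optimality.

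For the complexity, there are $O(n)$ edges and $O(n)$ admissible values of $k$ (the DLC size is bounded by the number of edges), giving $O(n^2)$ states. Each state aggregates over $O(n)$ predecessor edges, and each transition invokes one $\mathrm{FLP}_1$ evaluation at cost $O(n^2 \log n)$; this gives $O(n^5 \log n)$ overall, dominating the $O(n^{3+\delta})$ preprocessing. The main technical obstacle I anticipate is justifying that Corollary~\ref{ccorner} still applies when the sub-curve's first edge is the arbitrary truncation from $f(i,k-1)$ to $p_{i+1}$ rather than a full edge of $P$: the alignment conditions of Lemma~\ref{llinkalgn} and the line-enumeration of Lemma~\ref{llinkfind} should go through with the point $f(i,k-1)$ playing the role formerly taken by the vertex $p_i$, and the preprocessed halfplane-proximity structures of Lemma~\ref{llinkfast} can still be queried on the same vertex sets.
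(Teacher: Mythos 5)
Your proposal is essentially the paper's own proof: the same dynamic program over (prefix edge, link count) pairs storing the earliest reachable endpoint ($\mathrm{FLP}$), the same recurrence via $\mathrm{FLP}_1$ on the truncated sub-curve starting at the previous state's endpoint, the same induction-plus-monotonicity correctness argument, and the same $O(n^2)$ states $\times$ $O(n)$ predecessors $\times$ $O(n^2\log n)$ per evaluation accounting. Your explicit flagging of the truncated-first-edge issue is a point the paper glosses over, but the overall route is identical.
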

\begin{proof}
We use dynamic programming to fill a two-dimensional table $F$.
$F[i, j]$, for $1 \le i \le n$ and $1 \le j \le n$, denotes
$\mathrm{FLP}_j(\left< p_1, p_2, ..., p_i \right>)$.
Parallel to table $F$, we can store the last link of $F[i, j]$
in another two-dimensional table $L$ to reconstruct the chain.
For points $u$ and $v$ on $P$, $u < v$ holds if $u$ appears
before $v$ on $P$.  We fill the tables as follows.
\begin{enumerate}
\item $F[i][1]$ is initialized as $\mathrm{FLP}_1(\left<p_1, p_2, ..., p_i\right>)$,
forcing the first vertex of the resulting link to be on $p_1$ (Corollary~\ref{rlinkfind}).
$L[i][1]$ is initialised as the link corresponding to $\mathrm{FLP}_1(\left<p_1, p_2, ..., p_i\right>)$.
If there is no such link, $F[i][1]$ and $L[i][1]$ are not filled.
\item
For $d$ from $2$ to $n$, $F[i][d]$ and $L[i][d]$ for $1 \le i \le n$
are filled as follows:
The value $F[i][d]$ is the minimum value
of $\mathrm{FLT}_1(\left< F[j][d - 1], p_{j+1}, p_{j+2}, ..., p_{i}\right>)$,
over all indices of $j$, where $j < i$ and $F[j][d - 1]$ is filled.
The value of $L[i][d]$ should indicate the link corresponding to
of $\mathrm{FLT}_1(\left< F[j][d - 1], p_{j+1}, p_{j+2}, ..., p_{i}\right>)$.
\end{enumerate}
Based on Corollary~\ref{ccorner}, filling these tables can be done with
the time complexity $O(n^5 \log n)$.

Let $m$ be the lowest index, such that $F[n][m]$ is filled.  By
following the links backwards using dynamic programming tables,
we obtain a DLC
$D = \left< p'_1 p'_2, p'_3 p'_4, ..., p'_{2k-1} p'_{2k} \right>$.
We prove that the size of $D$ is the minimum possible.
To do so, we use induction on $d$ to show that $F[i][d]$ for $1 \le i \le n$
is filled if and only if there is a DLC for
$\left< p_1, p_2, ..., p_i \right>$ with $d$ links.
For $d = 1$, the statement is trivial and follows from
the definition of $\mathrm{FLT}_1$ and its computation
(Corollary~\ref{rlinkfind}).
For $d > 1$, suppose there is a DLC
$\left< q'_1 q'_2, q'_3 q'_4, ..., q'_{2d - 1} q'_{2d} \right>$
for $\left< p_1, p_2, ..., p_i \right>$.
Let $q'_{2d - 2}$ be on $p_{j} p_{j + 1}$.
Obviously, $\left< q'_1 q'_2, q'_3 q'_4, ..., q'_{2d - 3} q'_{2d - 2} \right>$
is a DLC of $\left< p_1, p_2, ..., p_{j + 1} \right>$.
By induction hypothesis, $F[j][d - 1]$ is filled with
a point on or before $q'_{2d - d}$.
Since $q'_{2d - 1} q'_{2d}$ is a valid link, where $q'_{2d - 1}$
appears after $q'_{2d - 2}$ on $P$,
there is a valid link from $q'_{2d - 2} p_{j+1}$ to $p_{i - 1} p_i$,
and $P[i][d]$ is filled in the dynamic programming algorithm.
\end{proof}

\begin{thm}
\label{tmaincomp}
A curve-restricted simplification of a polygonal curve
$P = \left< p_1, p_2, ..., p_n \right>$
can be computed in $O(n^5 \log n)$,
such that its number of links is at most twice the number
of the links of an optimal simplification.
\end{thm}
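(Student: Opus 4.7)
The plan is to invoke the algorithm of Theorem~\ref{tmain} as a subroutine. First I would compute a minimum-size DLC $D$ of $P$ (with $p'_1$ constrained to be $p_1$, as in the dynamic program) together with its explicit segments via the auxiliary table $L$, in $O(n^5 \log n)$ time. Let $k^*$ denote the number of segments in $D$. Then I would apply Proposition~\ref{rdlc} to convert $D$ into a curve-restricted simplification of $P$ with exactly $2k^*$ links by inserting, between each pair of consecutive DLC segments, a connector running along the common edge of $P$ they share, and finally connecting the end of the last DLC segment to $p_n$ along $p_{n-1}p_n$. This conversion takes $O(k^*) = O(n)$ additional time, so the total running time remains $O(n^5 \log n)$.

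The approximation guarantee reduces to showing $k^* \le \mathrm{OPT}$, where $\mathrm{OPT}$ is the number of links of an optimal curve-restricted simplification. Let $S = \langle \ell_1, \ldots, \ell_{\mathrm{OPT}} \rangle$ be such an optimal simplification with vertices $x_1 = p_1, x_2, \ldots, x_{\mathrm{OPT}+1} = p_n$ appearing in order along $P$. I would argue that $S$ is itself a valid DLC of size $\mathrm{OPT}$ by setting $p'_{2i-1} := x_i$ and $p'_{2i} := x_{i+1}$ for $i = 1, \ldots, \mathrm{OPT}$ and verifying the four conditions of Definition~\ref{ddlc}: condition (i) holds because $p'_1 = p_1 \in p_1 p_2$ and $p'_{2\,\mathrm{OPT}} = p_n \in p_{n-1} p_n$; condition (ii) holds because each $\ell_i$ is a valid simplification link by the validity of $S$; condition (iii) is immediate because $p'_{2i} = p'_{2i+1} = x_{i+1}$, so the two points coincide and trivially lie on any common edge of $P$ containing $x_{i+1}$; and condition (iv) follows from the order of the vertices of $S$ along $P$. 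Hence $k^* \le \mathrm{OPT}$, and the simplification output by the algorithm has $2k^* \le 2\,\mathrm{OPT}$ links.

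I expect the argument to be short once the machinery of the earlier sections is in place, so the only subtlety is in verifying (iii) and (iv) when a simplification vertex $x_{i+1}$ happens to coincide with some vertex $p_j$ of $P$: in that case $x_{i+1}$ lies on two edges, but since $p'_{2i} = p'_{2i+1}$ we may declare either of those edges as the ``common edge'' without breaking the DLC structure, and the ``in order'' requirement of (iv) tolerates equal consecutive points. With this, Theorem~\ref{tmaincomp} follows by combining Theorem~\ref{tmain}, Proposition~\ref{rdlc}, and the bound $k^* \le \mathrm{OPT}$.
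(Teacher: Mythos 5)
Your proposal is correct and follows essentially the same route as the paper: compute a minimum-size DLC via Theorem~\ref{tmain}, convert it to a simplification with twice as many links via Proposition~\ref{rdlc}, and observe that any optimal simplification is itself a DLC, so the minimum DLC size is at most $\mathrm{OPT}$. Your explicit verification of the four conditions of Definition~\ref{ddlc} (including the degenerate case where a simplification vertex coincides with a vertex of $P$) is more detailed than the paper's one-line assertion that the optimum ``is also a DLC,'' but the argument is the same.
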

\begin{proof}
Let $D$ be the DLC of $P$ with $k$ links
computed using Theorem~\ref{tmain}.
We can obtain a curve-restricted simplification
$P'$ from $D$ with $m = 2k$ links (Proposition~\ref{rdlc}).
Let $O$ be a curve-restricted simplification of $P$ with
the minimum number of links $x$.
Based on Definition~\ref{ddlc}, $O$ is also a DLC of $P$.
Since $D$ is a DLC with the minimum number of links,
$x \ge k$.  This implies $2x \ge 2k = m$.
\end{proof}

\section{Concluding Remarks}
\label{sconclude}
Although, the min-\# curve-restricted simplification of
polygonal curves can reduce the number of the vertices of the
curves much better than vertex-restricted simplification, the time complexity of the
algorithm presented in this paper is not very appealing for
real-world applications.  A faster approximate or exact algorithm
may fill this gap.

%\bibliographystyle{unsrt}
%\bibliography{sima}

\end{document}